\documentclass{article}

\usepackage{graphicx}
\usepackage{cmap}
\usepackage{tikz}
\usetikzlibrary{quantikz2}
\usepackage{algorithm}
\usepackage{algpseudocode}
\usepackage{amsthm}
\usepackage{amssymb}

\newtheorem{theorem}{Theorem}[section]
\newtheorem{corollary}{Corollary}[theorem]
\newtheorem{lemma}[theorem]{Lemma}

\title{Application of a Quantum Amplitude Redistribution Algorithm to the Data Filtering Problem}

\author{
K. R.~Zakharova, reshetova\_carina@yandex.ru
\\
A. A.~Chernikov, artem\_chernikov00@list.ru
\\
S. S.~Sysoev, s.s.sysoev@spbu.ru
\\
St.\ Petersburg State University, Russia
}
\begin{document}
\maketitle

\begin{abstract}
This paper presents an analysis of the applicability of a quantum amplitude redistribution algorithm to the data filtering problem and the results of modeling the algorithm's operation in comparison with a median filter.
\end{abstract}


\section{Introduction}

Quantum information studies the computational capabilities of physical systems described by the mathematical apparatus of quantum mechanics. One of the important and rapidly growing areas in this field is the development of new quantum algorithms in the circuit model \cite{deu}, \cite{dv}. The advantage of such algorithms is based on the use of quantum parallelism — the ability of a quantum system to follow several computational trajectories simultaneously. The number of such trajectories grows exponentially with the size of the computational system, but when obtaining classical information from a quantum state (upon measurement) it collapses to a single trajectory \cite{neu}. The main approach to overcoming this drawback is to redistribute the probability amplitudes of quantum states before their collapse (for example, using interference). This makes it possible to increase the probability of measuring certain states (more informative ones) relative to others (less informative ones). In some cases, the exponential advantage can be preserved, as for example in Shor's algorithm \cite{shor}.

The Grover algorithm \cite{grov} and its more general modifications \cite{brass} allow searching for numbers corresponding to an arbitrary algorithmic condition of the form $A(x) = 1$, where $A$ is some algorithm and $x$ is the numerical parameter of $A$. The choice of the algorithm $A$ determines the nature of the problem being solved. If, for example, $x$ is a numerical representation of a path in a complete graph and $A$ is a check of the condition $len(x) < b$, then the Grover algorithm can be used to solve the traveling salesman problem. In \cite{saf}, the algorithm \cite{brass} is used to amplify the amplitude of the correct result of binary classification in quantum machine learning.

If the condition $A(x) = 1$ corresponds to the situation of selecting the correct signal value from an array containing a digitized signal with outliers (anomalies), then the Grover algorithm can be used as a filtering method. The drawback of such a solution is that redistribution of probability amplitudes in the Grover algorithm requires a fixed number of steps (iterations) $T = O(\sqrt{N})$, where $N$ is the size of the input array.

In \cite{z1}, \cite{z2} a filtering algorithm is proposed that operates in time $O(1)$. Unlike \cite{grov}, \cite{brass}, in this algorithm the amplitudes of the vectors encoding the array elements are decreased proportionally to the distance of the elements from some reference value. As a result of measuring the state, one obtains the array index after probability-amplitude redistribution. Thus, at the output of the method, one obtains an element that was originally present in the data, which makes it similar to a median filter \cite{med}, \cite{med_21}, \cite{med_23}.

In this work, the properties of the algorithm \cite{z1} are investigated for the task of signal filtering; the dependence of filtering quality on certain characteristics of the signal and outliers is analyzed; and simulation results of the algorithm operation are presented in comparison with a median filter.
The paper is organized as follows: in the second section, a brief description of the algorithm from \cite{z1} is given. In the third section, the transformation scheme used for amplitude redistribution is presented. The fourth section is devoted to a mathematical analysis of filtering results under certain assumptions regarding anomalies encountered in the signal. In the fifth section, simulation results of the algorithm for filtering artifacts in real images are provided in comparison with a median filter. The conclusion formulates statements about the limits of applicability of the investigated algorithm.

\section{Quantum Amplitude Redistribution Algorithm}
The quantum amplitude redistribution algorithm (Quantum Amplitudes Redestribution Algorithm, abbreviated as QARA) described in \cite{z1}, \cite{z2} takes as input an array $A$ of $M$ $n$-bit non-negative integers and an $n$-bit reference value $r$.

To redistribute amplitudes, the algorithm uses a matrix parameterized by the angle $\varphi$:

\begin{equation}
R_n(\varphi) =
\left(\begin{array}{cccc}
\cos \frac{\varphi}{2} & \pm \frac{\sin \frac{\varphi}{2}}{\sqrt{2^m-1}} & \cdots & \pm \frac{\sin \frac{\varphi}{2}}{\sqrt{2^m-1}} \\
\pm \frac{\sin \frac{\varphi}{2}}{\sqrt{2^m-1}} & \cos \frac{\varphi}{2} & \cdots & \pm \frac{\sin \frac{\varphi}{2}}{\sqrt{2^m-1}} \\
\vdots & \vdots & \ddots & \vdots \\
\pm \frac{\sin \frac{\varphi}{2}}{\sqrt{2^m-1}} & \pm \frac{\sin \frac{\varphi}{2}}{\sqrt{2^m-1}} & \cdots & \cos \frac{\varphi}{2}
\end{array}\right)\label{eq:rot_matrix}
\end{equation}

The operation of the algorithm can be described by the following pseudocode:

\begin{algorithm}
\caption{Quantum Amplitude Redistribution Algorithm}\label{alg:qara}
    \begin{algorithmic}[1] 
        \Procedure{QARA}{$A,r$}
            \State \Comment{Initialize registers}
            \State Data register $|D\rangle \gets |0\rangle^{\otimes n}$
            \State Reference register $|R\rangle \gets r$
            \State $m \gets log_2 M$
            \State Counter register $|C\rangle \gets H^{\otimes m}|0\rangle^{\otimes m}$

            \State \Comment{Load data (Fig.~\ref{fig:arr_load})}
            \State $j=0$
            \While{$j<M$}
                \State // mark index $j$ in register $|C\rangle$:
                \State $\forall i: 2^i \land j = 0$ apply $X |C\rangle_i $ 
                \State // load $A_j$ into $|D\rangle$ controlled by $|C\rangle$:
                \State $\forall i: 2^i \land A_j = 1$ apply $|D\rangle \gets^{C} A_j$ 
                \State // unmark index $j$ in register $|C\rangle$:
                \State $\forall i: 2^i \land j = 0$ apply $X |C\rangle_i $ 
                \State $j \gets j+1$
            \EndWhile

            \State \Comment{Rotations}
            \State $i=0$
            \While{$i<n$}
                \State $\varphi_i = \frac{\pi}{2^{i+1}}$
                \State Apply $R_{\varphi_i}$ to $|C\rangle$, controlled by $|R\rangle_i$
                \State Apply $R_{-\varphi_i}$ to $|C\rangle$, controlled by $|D\rangle_i$
                \State $i \gets i+1$
            \EndWhile
            \State $j \gets Measure|C\rangle$
            \State \textbf{return} $A_j$
        \EndProcedure
    \end{algorithmic}
\end{algorithm}

\begin{figure}[ht]
\centering
\includegraphics[width=0.4\textwidth]{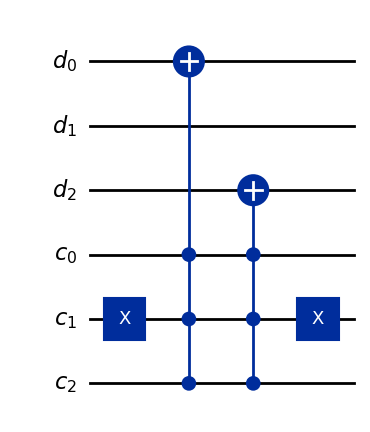}
\caption{Loading the value $A_5 = 5$. As a result, the index $|101\rangle$ in $|C\rangle$ is entangled with the value $|101\rangle$ in $|D\rangle$} \label{fig:arr_load}
\end{figure}

\begin{figure}[ht]
\centering
\includegraphics[width=1\textwidth]{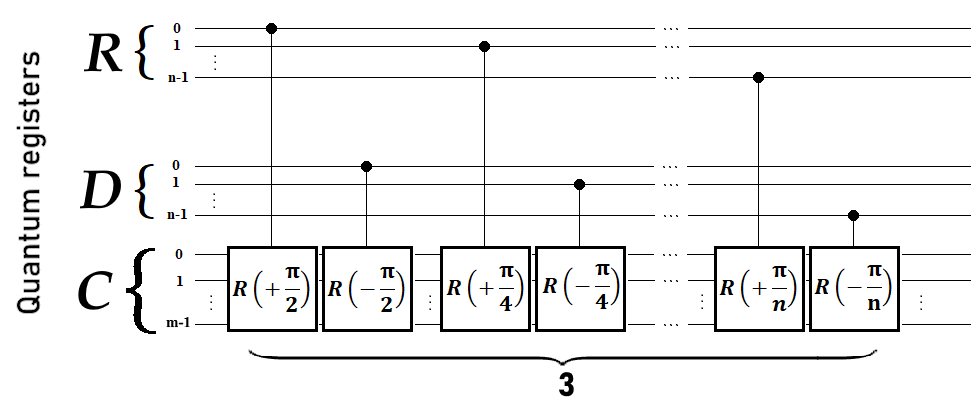}
\caption{Schematic diagram of the rotations of the quantum algorithm for probability redistribution}\label{fig:1_PSchm}
\end{figure}

\clearpage
\section{Implementation of the Rotation}
This section contains the definition of the transformation used in the algorithm (\ref{alg:qara}) for redistributing amplitudes between array indices, as well as a proof of its unitarity and a description of its decomposition into a sequence of elementary gates with a circuit depth that grows quadratically with the number of qubits.

\subsection{Definition}
The transformation is defined by the matrix (\ref{eq:rot_matrix}), $2^n \times 2^n$, where $n$ is the number of qubits.
An arbitrary placement of signs outside the diagonal is allowed, subject only to the requirement that the transformation is unitary. Among admissible sign arrangements, it was decided to use the one that proved simplest to implement.

We define the sign arrangement for one qubit as follows:
$$
R_1(\varphi) = 
\left(\begin{array}{cc}
\cos{\frac{\varphi}{2}} & -\sin{\frac{\varphi}{2}}\\
\sin{\frac{\varphi}{2}} & \cos{\frac{\varphi}{2}}\\
\end{array}\right)
$$

For $(n+1)$ qubits we define the sign arrangement recursively:
$$
R_{n+1}(\varphi) = 
\left(\begin{array}{cc}
A_n & -B_n\\
B_n & A_n\\
\end{array}\right)
$$

In the submatrix $A_n$ the signs are arranged in the same way as in $R_n$, while in the submatrix $B_n$ they are arranged as in the $n$-qubit Walsh--Hadamard transformation. Note that transposition of the matrix is equivalent to changing the signs of the elements outside the diagonal.

\subsection{Proof of Unitarity}

\begin{lemma}
Let $H_n$ be the Walsh--Hadamard operator for $n$ qubits. Then the following equality holds:
\begin{equation}
    H_nR_n(\varphi)H_n = R^\dagger_n(\varphi)
    \label{eq:1}
\end{equation}
\label{lemma:hrhr}
\end{lemma}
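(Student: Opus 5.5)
The plan is to split $R_n(\varphi)$ into a diagonal part and a sign pattern, reducing the lemma to a statement about the sign pattern alone, and then prove that by induction on $n$ using the recursive definition of $R_{n+1}$.

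\textbf{Decomposition.} Since all entries of $R_n(\varphi)$ are real, $R_n^\dagger(\varphi)=R_n^T(\varphi)$. Write
$$R_n(\varphi)=\cos\tfrac{\varphi}{2}\, I + \frac{\sin\frac{\varphi}{2}}{\sqrt{2^n-1}}\, S_n ,$$
where $S_n$ is a $2^n\times 2^n$ matrix with zeros on the diagonal and $\pm1$ off the diagonal. The recursive definition of the sign arrangement translates into
$$S_1=\left(\begin{array}{cc}0&-1\\1&0\end{array}\right),\qquad S_{n+1}=\left(\begin{array}{cc}S_n&-W_n\\W_n&S_n\end{array}\right).$$
Here $W_n=\sqrt{2^n}\,H_n$ is the unnormalized Walsh--Hadamard sign matrix. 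It is real and symmetric, and $H_n^2=I$.

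\textbf{Reduction.} By induction, $S_n^T=-S_n$: the base case is checked directly, and for the step $\left(\begin{array}{cc}S_n&-W_n\\W_n&S_n\end{array}\right)^T=\left(\begin{array}{cc}-S_n&W_n\\-W_n&-S_n\end{array}\right)$ because $W_n$ is symmetric. Hence $R_n^\dagger(\varphi)=\cos\frac{\varphi}{2} I-\frac{\sin\frac{\varphi}{2}}{\sqrt{2^n-1}}S_n$. Since $H_nIH_n=I$, the lemma reduces to
$$H_nS_nH_n=-S_n .$$

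\textbf{Induction on this identity.} For $n=1$ it is a direct $2\times2$ computation: $S_1=-iY$, and $HYH=-Y$. For the inductive step, write $H_{n+1}=\frac{1}{\sqrt2}\left(\begin{array}{cc}H_n&H_n\\H_n&-H_n\end{array}\right)$ and multiply the three block matrices. The main care is in the block bookkeeping. The first product gives blocks $H_n(S_n\pm W_n)$. After the second multiplication and division by $2$, the result is
$$H_{n+1}S_{n+1}H_{n+1}=\left(\begin{array}{cc}H_nS_nH_n&H_nW_nH_n\\-H_nW_nH_n&H_nS_nH_n\end{array}\right).$$
Now $H_nW_nH_n=\sqrt{2^n}H_n^3=W_n$, and $H_nS_nH_n=-S_n$ by the induction hypothesis. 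So the right-hand side is $\left(\begin{array}{cc}-S_n&W_n\\-W_n&-S_n\end{array}\right)=-S_{n+1}$, which completes the induction and proves (\ref{eq:1}).

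I expect the only real obstacle to be getting the block signs right in this product. Everything else follows from $H_n^2=I$ and the symmetry of $W_n$.
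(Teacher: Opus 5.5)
Your proof is correct. Its skeleton matches the paper's: induction on $n$ through the recursive block structure, with the same block multiplication by $H_{n+1}=\frac{1}{\sqrt2}\left(\begin{array}{cc}H_n&H_n\\H_n&-H_n\end{array}\right)$ on both sides. What differs is the object you induct on.

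The paper keeps the angle in play. It writes $R_{n+1}(\varphi)$ with diagonal blocks $\sqrt{1-a^2}\,R_n$ and off-diagonal blocks $-aH_n$ and $aH_n$, where $a=\sqrt{2^n/(2^{n+1}-1)}\,\sin\frac{\varphi}{2}$ and $R_n$ is taken at an unspecified other angle $\psi$. It then applies the hypothesis $H_nR_nH_n=R_n^\dagger$ at that angle. It recognises the result as $R_{n+1}^\dagger$ directly by taking the block conjugate transpose, using only $H_n^\dagger=H_n$.

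You instead strip off the multiple of the identity first and induct on the angle-free sign matrix, proving $H_nS_nH_n=-S_n$. This avoids two points the paper leaves implicit:
\begin{itemize}
\item that the diagonal block really equals $\sqrt{1-a^2}R_n(\psi)$ for a suitable $\psi$, which requires $\cos\frac{\psi}{2}=\cos\frac{\varphi}{2}/\sqrt{1-a^2}$;
\item that the induction hypothesis must therefore be quantified over all angles.
\end{itemize}
You also prove the antisymmetry $S_n^T=-S_n$, which the paper only asserts and later relies on in the gate decomposition.

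The cost is that you need this antisymmetry to identify $R_n^\dagger$ with $\cos\frac{\varphi}{2}I-\frac{\sin(\varphi/2)}{\sqrt{2^n-1}}S_n$. The paper's formulation gets that identification for free.
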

\begin{proof}
We use mathematical induction over the number of qubits. For one qubit we have:
$$
H_1R_1(\varphi)H_1 =
$$
$$
=\frac{1}{\sqrt{2}}
\left(\begin{array}{cc}
1 & 1\\
1 & -1\\
\end{array}\right)
\left(\begin{array}{cc}
\cos{\frac{\varphi}{2}} & -\sin{\frac{\varphi}{2}}\\
\sin{\frac{\varphi}{2}} & \cos{\frac{\varphi}{2}}\\
\end{array}\right)
\frac{1}{\sqrt{2}}
\left(\begin{array}{cc}
1 & 1\\
1 & -1\\
\end{array}\right) =
$$
$$
= 
\left(\begin{array}{cc}
\cos{\frac{\varphi}{2}} & \sin{\frac{\varphi}{2}}\\
-\sin{\frac{\varphi}{2}} & \cos{\frac{\varphi}{2}}\\
\end{array}\right) = R^\dagger_1(\varphi)
$$

Introduce the notation:
$$
a = \sqrt{\frac{2^n}{2^{n+1}-1}}\sin{\frac{\varphi}{2}}
$$

Then $R_{n+1}(\varphi)$ can be represented in the following form:
$$
R_{n+1}(\varphi) = 
\left(\begin{array}{cc}
\sqrt{1 - a^2}R_n & -aH_n\\
aH_n & \sqrt{1 - a^2}R_n\\
\end{array}\right)
$$

The operator $R_n$ is parameterized by some other angle different from $\varphi$; its value is omitted for brevity.

Induction step:
$$
H_{n+1}R_{n+1}(\varphi)H_{n+1} =
$$
$$
=
\frac{1}{\sqrt{2}}
\left(\begin{array}{cc}
H_n & H_n\\
H_n & -H_n\\
\end{array}\right)
\left(\begin{array}{cc}
\sqrt{1 - a^2}R_n & -aH_n\\
aH_n & \sqrt{1 - a^2}R_n\\
\end{array}\right) \times
$$
$$
\times\frac{1}{\sqrt{2}}
\left(\begin{array}{cc}
H_n & H_n\\
H_n & -H_n\\
\end{array}\right) =
$$
$$
= 
\left(\begin{array}{cc}
\sqrt{1 - a^2}R^\dagger_n & aH_n\\
-aH_n & \sqrt{1 - a^2}R^\dagger_n\\
\end{array}\right) = R^\dagger_{n+1}(\varphi)
$$
\end{proof}

\begin{theorem}
The transformation $R_n(\varphi)$ is unitary.
\end{theorem}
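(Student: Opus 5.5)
We prove the statement by induction on the number of qubits $n$, using the block decomposition from the proof of Lemma~\ref{lemma:hrhr}. For $n=1$ the matrix $R_1(\varphi)$ is an ordinary plane rotation, so $R_1^\dagger R_1 = I$ follows at once from $\cos^2\frac{\varphi}{2}+\sin^2\frac{\varphi}{2}=1$. For the step we assume that $R_n$ is unitary for every angle. This is needed because the block $R_n$ inside $R_{n+1}(\varphi)$ carries a different angle, as noted in the proof of Lemma~\ref{lemma:hrhr}.

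We write
$$
R_{n+1}(\varphi)=\left(\begin{array}{cc}
\sqrt{1-a^2}R_n & -aH_n\\
aH_n & \sqrt{1-a^2}R_n
\end{array}\right),
\qquad
R_{n+1}^\dagger(\varphi)=\left(\begin{array}{cc}
\sqrt{1-a^2}R_n^\dagger & aH_n\\
-aH_n & \sqrt{1-a^2}R_n^\dagger
\end{array}\right),
$$
using that $a$ is real and $H_n$ is real symmetric. Multiplying the blocks gives the following:
\begin{itemize}
\item Each diagonal block equals $(1-a^2)R_n^\dagger R_n + a^2 H_n^2$. By the induction hypothesis and $H_n^2=I$, this is $(1-a^2)I+a^2I=I$.
\item The off-diagonal blocks are $a\sqrt{1-a^2}\,(H_nR_n - R_n^\dagger H_n)$ and its negative adjoint.
\end{itemize}

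The crux is showing that these cross terms vanish. Lemma~\ref{lemma:hrhr} does exactly this: multiplying $H_nR_nH_n=R_n^\dagger$ on the right by $H_n$ gives $H_nR_n = R_n^\dagger H_n$. Hence $R_{n+1}^\dagger R_{n+1}=I$. Since the matrices are square and finite, this already implies unitarity, and one can also check $R_{n+1}R_{n+1}^\dagger=I$ the same way using $R_nH_n=H_nR_n^\dagger$.

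One point needs care, and I expect it to be the main obstacle: the block form itself. The subblock $A_n$ must equal $\sqrt{1-a^2}$ times a genuine $R_n$ at some angle $\psi$, and $B_n$ must equal exactly $aH_n$. For this I would check that the diagonal entry $\cos\frac{\varphi}{2}$ and the off-diagonal magnitude $\sin\frac{\varphi}{2}/\sqrt{2^{n+1}-1}$ of $R_{n+1}$ match. Concretely, we need $\cos\frac{\varphi}{2}=\sqrt{1-a^2}\cos\frac{\psi}{2}$ and $\frac{\sin(\varphi/2)}{\sqrt{2^{n+1}-1}}=\sqrt{1-a^2}\,\frac{\sin(\psi/2)}{\sqrt{2^n-1}}$. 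Summing the squares, $\cos^2\frac{\varphi}{2}+\frac{2^n-1}{2^{n+1}-1}\sin^2\frac{\varphi}{2}=1-a^2$, which holds by the definition of $a$. A suitable $\psi$ therefore exists, and the induction closes.
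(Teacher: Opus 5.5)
Your proof is correct and follows the paper's argument: induction on $n$ via the block form of $R_{n+1}(\varphi)$, with the diagonal blocks reducing to $(1-a^2)I+a^2I=I$ by the induction hypothesis and $H_n^2=I$, and the cross terms vanishing by Lemma~\ref{lemma:hrhr}; that you compute $R_{n+1}^\dagger R_{n+1}$ rather than the paper's $R_{n+1}R_{n+1}^\dagger$ is immaterial for square matrices. Your explicit check that the upper-left block really is $\sqrt{1-a^2}\,R_n(\psi)$ for some angle $\psi$, with the hypothesis taken over all angles, fills in a step the paper only asserts; the one slip, harmless since both blocks vanish, is that the lower-left cross block is the negative of the upper-right one (equivalently its adjoint), not its negative adjoint.
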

\begin{proof}
Again, we apply mathematical induction over the number of qubits. $R_1(\varphi)$ is nothing other than a rotation on the Bloch sphere around the $Y$ axis, whose unitarity is evident.
Let $b=1-a^2$.

Induction step:
$$
R_{n+1}R_{n+1}^\dagger = 
$$
$$
=\left(\begin{array}{cc}
\sqrt{b}R_n & -aH_n\\
aH_n & \sqrt{b}R_n\\
\end{array}\right)
\left(\begin{array}{cc}
\sqrt{b}R_n^\dagger & aH_n^\dagger\\
-aH_n^\dagger & \sqrt{b}R_n^\dagger\\
\end{array}\right) =
$$
$$
=
\left(\begin{array}{cc}
bR_nR_n^\dagger + a^2 H_n H_n^\dagger & a\sqrt{b}R_nH_n^\dagger - a\sqrt{b}H_nR_n^\dagger\\
a\sqrt{b}H_nR_n^\dagger - a\sqrt{b}R_nH_n^\dagger & bR_nR_n^\dagger + a^2 H_n H_n^\dagger\\
\end{array}\right)
$$

By the induction hypothesis, $R_nR_n^\dagger = I$, hence
$$bR_nR_n^\dagger + a^2 H_n H_n^\dagger = b I + a^2 I = I$$

From Lemma \ref{lemma:hrhr} it follows that $H_nR^\dagger_n = R_nH^\dagger_n$, therefore
$$a\sqrt{b}H_nR_n^\dagger - a\sqrt{b}R_nH_n^\dagger = 0$$

Hence, $R_{n+1}R_{n+1}^\dagger = I$. Similarly one can show that $R^\dagger_{n+1}R_{n+1} = I$.
\end{proof}

\subsection{Decomposition into a Sequence of Gates}
To use the transformation $R_n(\varphi)$ in practice, for example on a quantum computer simulator, one needs to decompose it into a sequence of elementary operators (gates). For convenience, we will refer to such a decomposition as a quantum algorithm that implements the operator $R_n(\varphi)$.

Find the basis in which the operator $R_{n+1}(\varphi)$ takes a simple form. First, we move to the basis with the transition matrix $CH_{n}$ --- a controlled Walsh--Hadamard operator controlled by the most significant qubit:
$$
CH_{n+1}R_{n+1}(\varphi)CH_{n+1} =
$$
$$
=
\left(\begin{array}{cc}
I & 0\\
0 & H_n\\
\end{array}\right)
\left(\begin{array}{cc}
\sqrt{1 - a^2}R_n & -aH_n\\
aH_n & \sqrt{1 - a^2}R_n\\
\end{array}\right)
\left(\begin{array}{cc}
I & 0\\
0 & H_n\\
\end{array}\right) =
$$
$$
=
\left(\begin{array}{cc}
\sqrt{1 - a^2}R_n & -aI\\
aI & \sqrt{1 - a^2}R^\dagger_n\\
\end{array}\right)
$$

Next, we move to another basis with a transition matrix equivalent to a rotation around the $Y$ axis on the Bloch sphere by an angle $\frac{\pi}{2}$:
$$
\frac{1}{2}
\left(\begin{array}{cc}
I & -I\\
I & I\\
\end{array}\right)
\left(\begin{array}{cc}
\sqrt{b}R_n & -aI\\
aI & \sqrt{b}R^\dagger_n\\
\end{array}\right)
\left(\begin{array}{cc}
I & I\\
-I & I\\
\end{array}\right) =
$$
$$ =
\left(\begin{array}{cc}
\frac{\sqrt{b}}{2}(R_n + R^\dagger_n) & \frac{\sqrt{b}}{2}(R_n - R^\dagger_n) - aI\\
\frac{\sqrt{b}}{2}(R_n - R^\dagger_n) + aI & \frac{\sqrt{b}}{2}(R_n + R^\dagger_n)\\
\end{array}\right)
$$

We have previously established that transposition of $R_n$ is equivalent to changing the signs of the off-diagonal elements; therefore $(R_n + R^\dagger_n)$ is a diagonal matrix, and $(R_n - R^\dagger_n)$ is a matrix with zeros on the diagonal. After several simplifications, we obtain:
$$
\left(\begin{array}{cc}
I\cos{\frac{\varphi}{2}} & -R^\dagger_n(\psi)\sin{\frac{\varphi}{2}}\\
R_n(\psi)\sin{\frac{\varphi}{2}} & I\cos{\frac{\varphi}{2}}\\
\end{array}\right)
$$

Here
$$
\sin{\frac{\psi}{2}} = \frac{1}{\sqrt{2^{n+1}-1}},
$$
$$
R_n(\psi) = 
\frac{1}{\sqrt{2^{n+1}-1}}
\left(\begin{array}{cccc}
\sqrt{2^n} & \pm 1 & \cdots & \pm 1\\
\pm 1 & \sqrt{2^n} & \cdots & \pm 1\\
\vdots & \vdots & \ddots & \vdots\\
\pm 1 & \pm 1 & \cdots & \sqrt{2^n}\\
\end{array}\right)
$$

Finally, we move to the basis with the transition matrix of the operator $R^\dagger_n(\psi)$ controlled by the most significant qubit:
$$
\left(\begin{array}{cc}
I & 0\\
0 & R^\dagger_n(\psi)\\
\end{array}\right)
\left(\begin{array}{cc}
I\cos{\frac{\varphi}{2}} & -R^\dagger_n(\psi)\sin{\frac{\varphi}{2}}\\
R_n(\psi)\sin{\frac{\varphi}{2}} & I\cos{\frac{\varphi}{2}}\\
\end{array}\right)\times
$$
$$
\times\left(\begin{array}{cc}
I & 0\\
0 & R_n(\psi)\\
\end{array}\right) =
\left(\begin{array}{cc}
I\cos{\frac{\varphi}{2}} & -I\sin{\frac{\varphi}{2}}\\
I\sin{\frac{\varphi}{2}} & I\cos{\frac{\varphi}{2}}\\
\end{array}\right)
$$

Thus, we find the basis in which the original operator $R_{n+1}(\varphi)$ acts as a rotation around the $Y$ axis on the Bloch sphere for the most significant qubit by the angle $\varphi$, implemented by just a single gate. To carry out the transformation, it is sufficient to switch to this basis, perform the rotation, and return to the initial basis, as illustrated in Fig.~\ref{fig:rnphi_prototype}.
\begin{figure*}[t]
    \centering
\begin{quantikz}[wire types={b,q},classical gap=0.1cm]
&\gate[]{H_n}\gategroup[2,steps=3,style={dashed,rounded corners}]{}&&\gate[]{R_n^\dagger(\psi)}&&\gate[]{R_n(\psi)}\gategroup[2,steps=3,style={dashed,rounded corners}]{}&&\gate[]{H_n}& \\
&\ctrl{-1}&\gate[]{R_y(\frac\pi2)}&\ctrl{-1}&\gate[]{R_y(\varphi)}&\ctrl{-1}&\gate[]{R^\dagger_y(\frac\pi2)}&\ctrl{-1}&
\end{quantikz}
    \caption{The first step in finding the decomposition of $R_{n+1}(\varphi)$. The dashed lines highlight transition blocks to the appropriate basis. Here $\sin{\frac{\psi}{2}} = \frac{1}{\sqrt{2^{n+1}-1}}$.}
    \label{fig:rnphi_prototype}
\end{figure*}
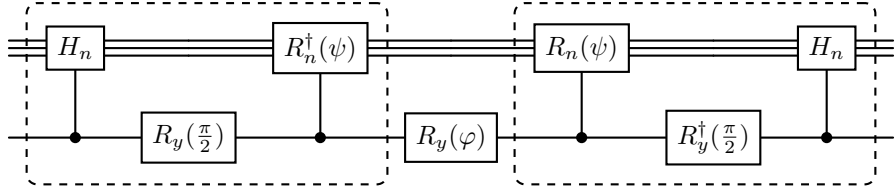

By recursively expanding $R_n(\psi)$ using the same scheme, we obtain an exponential growth of circuit depth with increasing number of qubits. However, this algorithm can be substantially simplified after decomposing the two internal operators $R_n(\psi)$. Between the transition back to the initial basis in the first transformation and the transition to the required basis in the second, no other gates occur. Therefore, these two transitions can be omitted. Carrying out such simplifications recursively and expanding all subsequent $R_k$, we obtain the optimized circuit shown in Fig.~\ref{fig:rnphi_final}.

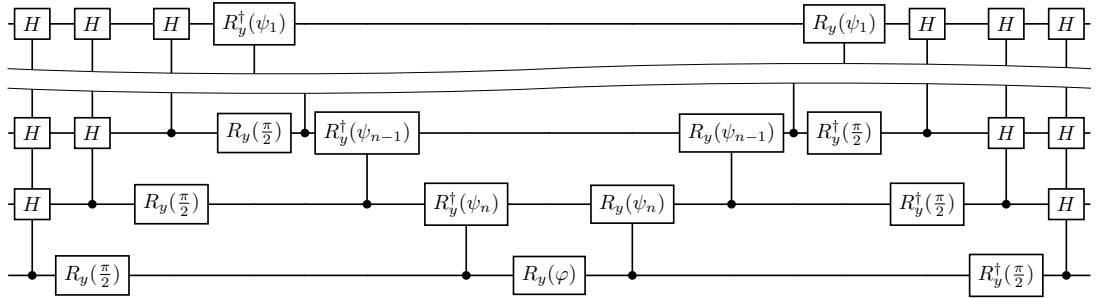
\begin{figure*}[t]
    \centering
\scalebox{0.78}{
\begin{quantikz}[wire types={q,q,q,q,q},classical gap=0.1cm,column sep=0.1cm]
&\gate{H}&\gate{H}&\gate{H}&\gate{R^\dagger_y(\psi_1)}&&&&&&&&\gate{R_y(\psi_1)}&\gate{H}&\gate{H}&\gate{H}& \\
\wave&&&&\ctrl{-1}&&&&&&&&\ctrl{-1}&&&&\\
&\gate{H}&\gate{H}&\ctrl{-2}&\gate{R_y(\frac\pi2)}&\ctrl{-1} &\gate{R^\dagger_y(\psi_{n-1})}&&&&\gate{R_y(\psi_{n-1})}&\ctrl{-1}&\gate{R^\dagger_y(\frac\pi2)}&\ctrl{-2}&\gate{H}&\gate{H}&\\
&\gate{H}&\ctrl{-3}&\gate{R_y(\frac\pi2)}&&&\ctrl{-1} &\gate{R^\dagger_y(\psi_n)}&&\gate{R_y(\psi_n)}&\ctrl{-1}&&&\gate{R^\dagger_y(\frac\pi2)}&\ctrl{-3}&\gate{H}& \\
&\ctrl{-4}&\gate{R_y(\frac\pi2)}&&&&&\ctrl{-1}&\gate{R_y(\varphi)}&\ctrl{-1}&&&&&\gate{R^\dagger_y(\frac\pi2)}&\ctrl{-4}&
\end{quantikz}
}
    \caption{Decomposition of $R_{n+1}(\varphi)$. Here $\sin{\frac{\psi_k}{2}} = \frac{1}{\sqrt{2^{k+1}-1}}$.}
    \label{fig:rnphi_final}
\end{figure*}

As a result, the number of elementary gates implementing the required algorithm becomes asymptotically $O(n^2)$, where $n$ is the number of qubits on which the transformation is defined. Thus, it depends quadratically on the number of qubits, which is a quite acceptable result. If in one time step we allow the simultaneous execution of several single-qubit Walsh--Hadamard operators controlled by the same qubit, then the circuit depth grows linearly with the number of qubits.

\section{Analysis of Applicability of the Algorithm to the Data Filtering Problem}

\subsection{Probability Distribution After Applying the Algorithm}

Before applying the algorithm (\ref{alg:qara}), the elements of the input array in register $D$ are pairwise entangled with their indices in register $C$, and the probability distribution among them is uniform. After applying the algorithm, the amplitudes in register $C$ are redistributed and the pairwise entanglement with the array elements disappears. To analyze the possibility of using the algorithm as a filter, we consider how the probability of an arbitrary index $C_k$ changes:

\begin{theorem}\label{t_ck}
If all elements in the array $D$ are distinct, then the probability of the element with index $k$ when measuring register $C$ after a single application of the algorithm (\ref{alg:qara}) is given by:
\begin{equation}             \label{eq:2}
 \begin{aligned}
 P(C_k) = \frac{1}{M}[\cos^2(\sum_{i=0}^{n-1}\frac{\pi(r_i-d_i^k)}{2^{n-i+1}}) +\\+ \frac{1}{M-1}\sum_{j \in [0,M-1], j\neq k}^{}\sin^2(\sum_{i=0}^{n-1}\frac{\pi(r_i- d_i^j)}{2^{n-i+1}})],
  \end{aligned}
\end{equation}
\end{theorem}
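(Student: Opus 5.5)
The plan is to track the joint state branch by branch. After the loading loop of Algorithm~\ref{alg:qara}, the state is
\[
\frac{1}{\sqrt{M}}\sum_{j=0}^{M-1}|j\rangle_C\,|d^j\rangle_D\,|r\rangle_R .
\]
Every rotation acts on $C$ and is controlled by a computational-basis qubit of $R$ or $D$. On the branch carrying $|d^j\rangle_D|r\rangle_R$, register $C$ is therefore acted on by the fixed product
\[
U_j=\prod_{i=0}^{n-1} R(-\varphi_i)^{d_i^j}\,R(\varphi_i)^{r_i}.
\]
The final state is thus $\frac{1}{\sqrt M}\sum_j U_j|j\rangle_C|d^j\rangle_D|r\rangle_R$.

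The key structural step is to show that these rotations form a one-parameter group.

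\textbf{Step 1 (one-parameter form).} By the definition in (\ref{eq:rot_matrix}), with the recursive sign arrangement of Section~3, the sign pattern does not depend on $\varphi$. Hence
\[
R_m(\varphi)=\cos\tfrac{\varphi}{2}\,I+\sin\tfrac{\varphi}{2}\,S ,
\]
where $S$ is a fixed real matrix with zero diagonal and off-diagonal entries $\pm 1/\sqrt{2^m-1}$. The paper's remark that transposition flips the off-diagonal signs gives $S^T=-S$.

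\textbf{Step 2 ($S^2=-I$).} The unitarity theorem gives $R R^\dagger=I$. Expanding,
\[
(cI+sS)(cI-sS)=c^2 I - s^2 S^2 = I,
\]
so $S^2=-I$ whenever $s\neq 0$.

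\textbf{Step 3 (group law).} With $S^2=-I$, $S$ behaves like the imaginary unit, and the angle-addition formulas give $R(\alpha)R(\beta)=R(\alpha+\beta)$. In particular all the rotations commute and $R(-\varphi)=R(\varphi)^\dagger$. Consequently $U_j=R(\theta_j)$ with
\[
\theta_j=\sum_i \varphi_i\,(r_i-d_i^j).
\]
Matching the bit-indexing convention of the statement, which numbers qubits from the opposite end so that $\varphi_i$ becomes $\pi/2^{n-i}$, we get $\theta_j/2=\sum_{i}\pi(r_i-d_i^j)/2^{n-i+1}$. I regard Steps 1--3 as the main point to get right. In particular, one must check that the rotations controlled by $|R\rangle$ and by $|D\rangle$ all use the same $S$, which holds because they are all the same operator family acting on the same $m$ qubits.

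\textbf{Step 4 (measurement).} Since the elements of $D$ are distinct, the states $|d^j\rangle_D$ are orthonormal, so the branches do not interfere. This gives
\[
P(C_k)=\frac1M\sum_{j}\bigl|\langle k|R(\theta_j)|j\rangle\bigr|^2 .
\]
For $j=k$ the matrix element is the diagonal entry $\cos(\theta_k/2)$. For $j\neq k$ its modulus squared is $\sin^2(\theta_j/2)/(M-1)$, because $M=2^m$. Summing over $j$ yields exactly (\ref{eq:2}).
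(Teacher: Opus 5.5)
Your proof is correct, and its skeleton matches the paper's:
\begin{itemize}
\item each branch $j$ gets rotation angle $\varphi_i(r_i-d_i^j)$ per bit;
\item the angles add up;
\item there is no interference, because the states $|d^j\rangle_D$ are orthonormal;
\item the result is read off from the diagonal entry $\cos$ and the off-diagonal entries $\pm\sin/\sqrt{M-1}$.
\end{itemize}

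You differ from the paper in how you justify the composition law $\prod_i R(\varphi_i)=R(\sum_i\varphi_i)$. The paper reads it off the gate decomposition of Fig.~\ref{fig:rnphi_final}. There, $R(\varphi)$ is a $\varphi$-independent change of basis wrapped around a single $R_y(\varphi)$ on the top qubit. The basis changes therefore cancel between consecutive rotations, and the $Y$-angles add. You instead write $R(\varphi)=\cos\frac{\varphi}{2}I+\sin\frac{\varphi}{2}S$ with a fixed skew-symmetric $S$, and extract $S^2=-I$ from the unitarity theorem. This gives $R(\varphi)=\exp(\frac{\varphi}{2}S)$, and angle addition becomes plain trigonometry.

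Your route is shorter and self-contained. It uses only the $\varphi$-independent sign pattern and unitarity, and it does not rely on the decomposition, whose derivation in the paper is only sketched. It also makes explicit two pieces of bookkeeping that the paper's proof glosses over:
\begin{itemize}
\item The half-angle: the paper writes $\cos\varphi_k$ where the matrix entry is $\cos\frac{\varphi}{2}$.
\item The bit order: with $\varphi_i=\pi/2^{i+1}$ taken literally, the most significant bit must be the one with $i=0$ in the algorithm, as you assume, for the statement's formula to come out. The paper's own sentence assigns $i=n-1$ to the most significant bit.
\end{itemize}

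The paper's route, in exchange, ties the theorem directly to the circuit that is actually simulated. It also gives a geometric picture: a single-qubit rotation in a fixed basis.
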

where $M$ is the number of elements of the input array,\\
$n$ is the number of qubits required to store one element,\\
$r \in \{0,1\}^n$ is the reference value,\\
the upper indices $k$ and $j$ correspond to the element numbers $d$ of the input array,\\
the lower indices $i$ correspond to the bit numbers starting from the least significant bit.\\
Neglecting the schematic details of the bitwise execution of the algorithm, the formula \ref{eq:2} can be rewritten as:
\begin{equation}             \label{eq:2_1}
 \begin{aligned}
 P(C_k) = \frac{1}{M}[\cos^2(\frac{\pi}{2^{n+1}}(r-d^k)) +\\+ \frac{1}{M-1}\sum_{j \in [0,M-1], j\neq k}^{}\sin^2(\frac{\pi}{2^{n+1}}(r-d^j))],
  \end{aligned}
\end{equation}

\begin{proof}
The algorithm (\ref{alg:qara}) consists of sequentially applying the matrix $R_n(\varphi)$ (rotations) to register $C$, controlled by the values of qubits in register $D$. The angle $\varphi$ for each rotation is determined by the index of the control qubit $0 \leq i < n$:
$$
\varphi_i = \pi/2^{i+1}
$$
In this case:
\begin{enumerate}
    \item if $r_i > d_i^k$, then a rotation by $\varphi_i$ is applied to the vector $|C_k\rangle$,
    \item if $r_i < d_i^k$, then a rotation by $-\varphi_i$ is applied to the vector $|C_k\rangle$,
    \item if $r_i = d_i^k$, then no rotation is applied to the vector $|C_k\rangle$.
\end{enumerate}
This rule can be generalized by the following expression for the rotation angle of the vector $|C_k\rangle$:
$$
\varphi_i^k = \varphi_i (r_i - d_i^k)
$$

From the decomposition of the rotation matrix (\ref{eq:rot_matrix}) into a sequence of elementary gates in Fig.~\ref{fig:rnphi_final}, it follows that applying the rotations sequentially is equivalent to performing a single rotation by their total angle:
$$
\prod_i R_n( \varphi_i) = R_n(\sum_i \varphi_i)
$$
The most significant qubit $i=n-1$ controls the rotation by the largest angle $\pi/4$, while the least significant qubit ($i=0$) controls the rotation by the smallest angle $\pi/2^{n+1}$. For the vector $|C_k\rangle$, the total angle $\varphi_k$ is:
\begin{equation}
    \varphi_k = \sum_{i=0}^{n-1} \frac{\pi(r_i - d_i^k)}{2^{n-i+1}}
\end{equation}
Thus, after applying all rotations, the amplitude of the vector $|C_k\rangle$ is distributed among all vectors in the space of register $C$:  
\begin{equation}
\begin{aligned}
|C_k\rangle \rightarrow cos(\varphi_k) |C_k\rangle  \\
|C_j\rangle_{j \neq k} \rightarrow \pm\frac{sin(\varphi_j)}{\sqrt{M-1}} |C_k\rangle     
\end{aligned}
\end{equation}
The rotations applied to vectors $C_{j, j\neq k}$ also redistribute the amplitudes among all vectors in the space of register $C$. At the same time, no interference occurs between vectors with different indices $j$, since they are entangled with different values in register $D$. Therefore, from each such vector, the value $a_j$ is added to the amplitude of $|C_k\rangle$:
$$
a_j = \pm\frac{sin(\varphi_j)}{\sqrt{M-1}} 
$$
Taking into account the initial amplitudes of all vectors in the space, which are $1/\sqrt{M}$, for the probability $P(|C_k\rangle)$ we obtain:
\begin{equation}
    P(|C_k\rangle) = \frac{1}{M} (cos^2(\varphi_k) + \sum_{j\neq k} a_j^2)
\end{equation}
\end{proof}

\subsection{Dependence of Filtering Quality on Input Data}

Theorem \ref{t_ck} lets us formulate several useful corollaries regarding the applicability of the investigated algorithm.

Expression (\ref{eq:2}) determines the probability of obtaining the outlier index upon measurement only in the case when all values in the array $D$ are distinct. Repeated values in the input array lead to interference in registers $C$ and $D$, which can both decrease and increase this probability. For this reason, for practical filter implementations one can consider a modification of the algorithm in which the values of the array are loaded into register $D$ together with their indices (in the least significant bits). This modification requires increasing register $D$ by the size of register $C$. The added qubits do not participate in rotations and are needed only to prevent interference of indices in register $|C\rangle$ when the array values $A$ are identical.

From expression (\ref{eq:2}) it follows that the reduction of the outlier probability is influenced by two factors:
\begin{itemize}
    \item the number and significance of the bits in which the outlier differs from the reference value,
    \item the number of array elements that differ from the reference value (outliers).
\end{itemize}
The first factor determines how close the angle $\varphi_k$ under the cosine in the first part of expression (\ref{eq:2}) is to $\pi/2$, while the second factor determines how many non-zero terms will appear in the second part of the expression.

\begin{corollary}[Best Conditions for Filtering]
If the input array contains a single deviation (outlier) $d^k$:
$$
\forall i \quad  d_i^k > r_i
$$
and all other elements of the array match the reference value:
$$
\forall j \in [0, M-1], j\neq k \quad  d^j = r ,
$$
then the probability of obtaining the index $k$ when measuring register $C$ after applying the algorithm is the smallest among all possible input arrays and is equal to:
$$
P(|C_k\rangle) = \frac{1}{M}cos^2\Big(\frac{\pi}{2} - \frac{\pi}{2^{n+1}}\Big)
$$
\end{corollary}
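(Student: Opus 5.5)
Plan: substitute the hypotheses into formula (\ref{eq:2}) of Theorem \ref{t_ck} and then argue minimality term by term. Formula (\ref{eq:2}) has two nonnegative parts: $\cos^2\varphi_k$ and $\frac{1}{M-1}\sum_{j\ne k}\sin^2\varphi_j$. Each can be bounded from below separately over all admissible input arrays, and both lower bounds are attained at the same configuration.

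\textbf{Second part.} If $d^j=r$, then every bit difference $r_i-d_i^j$ vanishes, so $\varphi_j=0$ and $\sin^2\varphi_j=0$. The second sum is therefore zero, which is its smallest possible value since it is a sum of squares. This gives $P(|C_k\rangle)=\frac{1}{M}\cos^2\varphi_k$.

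\textbf{First part.} The condition $d_i^k>r_i$ for all $i$ on binary digits forces $d_i^k=1$ and $r_i=0$, so $r_i-d_i^k=-1$ for every $i$. Then
\[
\varphi_k=-\sum_{i=0}^{n-1}\frac{\pi}{2^{n-i+1}}=-\frac{\pi}{4}\sum_{s=0}^{n-1}2^{-s}=-\Big(\frac{\pi}{2}-\frac{\pi}{2^{n+1}}\Big),
\]
using the geometric sum with $s=n-1-i$. Since $\cos^2$ is even, this yields the stated value.

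\textbf{Minimality.} For an arbitrary array, $|\varphi_k|\le\sum_i \pi/2^{n-i+1}=\frac{\pi}{2}-\frac{\pi}{2^{n+1}}<\frac{\pi}{2}$, because each $|r_i-d_i^k|\le1$. On $[0,\pi/2)$ the function $\cos^2$ is decreasing, so $\cos^2\varphi_k\ge\cos^2(\frac{\pi}{2}-\frac{\pi}{2^{n+1}})$, with equality only when all bit differences equal $\pm1$ with a common sign. Combining this with the nonnegativity of the second part gives the global minimum, attained exactly by the described configuration (or by its mirror case $r_i>d_i^k$).

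\textbf{Main obstacle.} The hardest point is the scope of ``among all possible input arrays.'' Theorem \ref{t_ck} assumes all elements are distinct, whereas the corollary's configuration has all $d^j=r$ repeated. I would handle this by appealing to the modification described just before the corollary, where indices are appended to register $D$. That modification removes interference, so formula (\ref{eq:2}) applies to this configuration as well. I would state explicitly that the minimization is taken over arrays for which (\ref{eq:2}) is valid.
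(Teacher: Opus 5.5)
Your proof is correct and more complete than the paper's. The paper only observes two things. First, the unique outlier's index cannot interfere with anything, so its $\cos^2\varphi_k$ term survives. Second, the indices with $d^j=r$ receive no rotation, so nothing is added to the amplitude of $|C_k\rangle$. It neither evaluates the geometric sum giving $\frac{\pi}{2}-\frac{\pi}{2^{n+1}}$ nor argues the ``smallest among all possible input arrays'' part. You supply both, via $|\varphi_k|\le\frac{\pi}{2}-\frac{\pi}{2^{n+1}}<\frac{\pi}{2}$, the monotonicity of $\cos^2$, and the nonnegativity of the sine sum.

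The real difference is how you each handle the distinctness hypothesis of Theorem~\ref{t_ck}. The paper's route is direct: since $R_n(0)=I$, the repeated branches $d^j=r$ interfere only among themselves and never feed $|C_k\rangle$. So the stated value holds for the unmodified algorithm. Your appeal to the index-appending modification, strictly speaking, gives the value only for the modified circuit, so it is worth adding that one line.

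Conversely, your explicit scope restriction buys something the paper lacks. For the unmodified algorithm the minimality claim is actually false, because repeated values can interfere destructively on $|C_k\rangle$. For example, take $n=m=2$, $r=0$ and the array $(3,3,0,0)$. The paper's sign pattern for $R_2$ gives $P(|C_1\rangle)=\frac14\big(\cos\frac{3\pi}{8}-\frac{1}{\sqrt3}\sin\frac{3\pi}{8}\big)^2\approx 0.006$, which is below $\frac14\cos^2\frac{3\pi}{8}\approx 0.037$. So restricting the minimization to arrays for which (\ref{eq:2}) holds (e.g.\ all arrays under the modification) is necessary, not a technicality.
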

\begin{proof}
Since the outlier $d^k$ is unique, its index in register $C$ does not interfere with other indices, and the cosine value in expression (\ref{eq:2}) remains valid for it. For the remaining indices, rotations are not performed due to the condition $d^j = r \quad \forall j\neq k$. Therefore, nothing is added to the amplitude of $|C_k\rangle$ from the other superposition vectors.
\end{proof}

\begin{theorem}\label{t_2s}
    If the input array contains a single outlier $d^k$, such that for some $l \in \mathbb{N}$:
$$
    d^k \geq 2^lr,
$$
and the deviation from $r$ for all other elements does not exceed some $\sigma$:
$$
\forall j\neq k \quad |d^j - r| \leq \sigma
$$
then the probability of obtaining the index $k$ when measuring register $C$ after applying the algorithm can be upper-bounded:
\begin{equation}\label{eq:3}
P(|C_k\rangle) \leq \frac{1}{M} \Big( cos^2 \Big( \frac{\pi}{4} - \frac{\pi}{2^{l+2}} \Big) + sin^2 \Big(\frac{\pi\sigma}{2^{n+1}} \Big) \Big)
\end{equation}
If $\sigma$ is also estimated in terms of $r$:
$$
\sigma \leq \frac{r}{2^p}
$$
then the upper bound for $P(|C_k\rangle)$ becomes:
\begin{equation}\label{eq:t3_est}
P(|C_k\rangle) \leq \frac{1}{M} \Big( cos^2 \Big( \frac{\pi}{4} - \frac{\pi}{2^{l+2}} \Big) + sin^2 \Big(\frac{\pi}{2^{l+p+1}} \Big) \Big)
\end{equation}

\end{theorem}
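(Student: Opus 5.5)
Plan: start from the simplified form (\ref{eq:2_1}) of Theorem~\ref{t_ck} and bound its two parts separately. The first part is $\cos^2\big(\frac{\pi}{2^{n+1}}(r-d^k)\big)$. The second is the average $\frac{1}{M-1}\sum_{j\neq k}\sin^2\big(\frac{\pi}{2^{n+1}}(r-d^j)\big)$.

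For the second part, each argument satisfies $\big|\frac{\pi}{2^{n+1}}(r-d^j)\big| \le \frac{\pi\sigma}{2^{n+1}}$. Since all values are $n$-bit, $\sigma<2^n$, so this argument lies in $[0,\pi/2]$. On that interval $\sin^2$ is even and monotone, so each term is at most $\sin^2\big(\frac{\pi\sigma}{2^{n+1}}\big)$. Averaging over $M-1$ terms keeps the same bound, which gives the second summand of (\ref{eq:3}).

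For the first part, write $\theta_k=\frac{\pi}{2^{n+1}}(d^k-r)$, which is nonnegative since $d^k\ge 2^l r\ge r$. Also $\theta_k<\pi/2$ because $d^k-r<2^n$. Since $\cos^2$ is decreasing on $[0,\pi/2]$, it suffices to show $\theta_k \ge \frac{\pi}{4}-\frac{\pi}{2^{l+2}}$, i.e.\ $d^k-r \ge 2^{n-1}(1-2^{-l})$.

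This is the main obstacle. The hypothesis $d^k\ge 2^l r$ only gives $d^k-r\ge (1-2^{-l})d^k$, so we need $d^k\ge 2^{n-1}$, i.e.\ the most significant bit of the outlier is set. I would make this explicit: either read it as part of what ``outlier'' means in the setting of the statement (the outlier occupies the top half of the range, matching the $\pi/4$ contributed by bit $n-1$ in the bitwise formula (\ref{eq:2})), or argue bitwise. In the bitwise argument, multiplying by $2^l$ shifts $r$ left by $l$ bits, so $r<2^{n-l}$ whenever $d^k$ fits in $n$ bits. Then the top bit of $d^k$ (assumed set) contributes $\pi/4$, and the bits of $r$ can subtract at most $\frac{\pi}{2^{n+1}}r<\frac{\pi}{2^{l+1}}$. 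Since the claimed slack is only $\pi/2^{l+2}$, I would try to sharpen this via $r\le d^k/2^l$ together with $d^k<2^n$, to get $\frac{\pi}{2^{n+1}}\cdot\frac{d^k}{2^l}$ against $\frac{\pi}{2^{n+1}}d^k$. That yields $\theta_k\ge \frac{\pi d^k}{2^{n+1}}(1-2^{-l})$, which is $\ge (\frac{\pi}{4})(1-2^{-l})$ exactly when $d^k\ge 2^{n-1}$. So the honest route is to state the assumption $d^k\ge 2^{n-1}$ and then conclude.

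Finally, for (\ref{eq:t3_est}), substitute $\sigma\le r/2^p$ together with $r\le d^k/2^l<2^{n-l}$. This gives $\frac{\pi\sigma}{2^{n+1}}<\frac{\pi}{2^{l+p+1}}$, and monotonicity of $\sin^2$ on $[0,\pi/2]$ then yields the second bound.
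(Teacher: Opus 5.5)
Your proposal is correct and follows the paper's route. Both arguments bound the cosine term and the averaged sine term of Theorem~\ref{t_ck} separately, using $r \le d^k/2^l$ and $\sigma \le r/2^p \le d^k/2^{l+p} \le 2^{n}/2^{l+p}$. The paper also silently relies on the top bit of $d^k$ being set when it ``takes the most significant bit of $d^k$ as $n-1$''. Without that assumption the bound fails: for small $d^k$ one gets $\cos^2\theta_k\approx 1$. So you are right to state $d^k\ge 2^{n-1}$ explicitly.

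The only difference is presentation. The paper substitutes a claimed worst case ($r=2^s$, $d^k=2^l r$, all deviations equal to $\sigma$) into the bitwise formula (\ref{eq:2}). Your monotonicity argument on (\ref{eq:2_1}) reaches the same bounds directly and more rigorously.
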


\begin{proof}
    The worst-case scenario that maximizes the probability $P(|C_k\rangle)$ under the conditions of the theorem is the following case:
    \begin{itemize}
        \item $r$ is a power of 2 ($r = 2^s$),
        \item $d^k = 2^l r$,
        \item $\forall j \neq k \quad d^j - r = \sigma = \frac{r}{2^p}$ or $r - d^j = \sigma = \frac{r}{2^p}$.
    \end{itemize}

    The first two conditions minimize the angle $\varphi_k$, which is the argument of the cosine in (\ref{eq:2}), since $\varphi_k = \frac{\pi}{2^{n+1}}(d^k - r)$. If $s$ is the position of the most significant bit in $r$, then
    $$
    d^k - r \geq 2^{l+1}r - r = r(2^{l+1} -1) \geq 2^s(2^{l+1}-1)
    $$
    The third condition maximizes the values of the angles $\varphi_j$, which are the arguments of the sines in (\ref{eq:2}). In this case,
    $$
    \frac{r}{2^p} \leq \frac{d_k}{2^{l+p}} \leq \frac{2^n}{2^{l+p}}
    $$
    Substituting these worst-case values into (\ref{eq:2}) and taking the most significant bit of $d^k$ as $n-1$ ($n-1=s+l$), we obtain:
    $$
    P(|C_k\rangle) \leq \frac{1}{M}\Big( cos^2\Big(\frac{\pi}{2^{n-(n-1)+1}} - \frac{\pi}{2^{n-(n-1-l)+1}}\Big) + 
    $$
    $$
    +\frac{1}{M-1}\sum_{j\neq k} sin^2 \Big(\frac{\pi\sigma}{2^{n+1}}\Big)\Big) \leq 
    $$
    $$
    \leq \frac{1}{M}\Big( cos^2\Big(\frac{\pi}{4} - \frac{\pi}{2^{l+2}}\Big) +  sin^2 \Big(\frac{\pi}{2^{l+p+1}}\Big)\Big)
    $$
\end{proof}

An important corollary of Theorems \ref{t_ck} and \ref{t_2s} is the following set of recommendations for using the algorithm:
\begin{itemize}
    \item The algorithm is sensitive to how fully populated the values of the most significant qubits are. The absence of differences between outliers and useful data in the most significant qubits leads to small values of the cosine argument $\varphi_k$ in (\ref{eq:2}), and consequently to an increase in the probability of obtaining the outlier index upon measurement.
    \item The algorithm is also sensitive to the filling of the least significant qubits. Shifting all array values by bits to the right increases the ratio $d^k/r$, which can significantly decrease the probability of measuring the outlier index in (\ref{eq:3}).
\end{itemize}

\section{Experimental Evaluation of the Algorithm}\label{exp}

The motivation for using the algorithm (\ref{alg:qara}) instead of a median filter is the gain in computational complexity. The median filter requires sorting the input array, which makes its complexity $O(n*M*log_2M)$, where $M$ is the array size and $n$ is the bit-width of the numbers compared during sorting. The complexity of the algorithm (\ref{alg:qara}) does not depend on the array size $M$ and is determined only by the bit-width: $O(n)$. Obtaining a reference value (for example, as the average value of the array elements) may require $O(n*M)$ bit-addition operations, leaving the gain of $log_2M$ times. If the reference value is taken as the value obtained after applying the algorithm to the previous window (the previous data segment), then computing the average is not required.

Adding a register with element indices that ensures the uniqueness of compared values does not increase computational complexity, since the added qubits do not participate in rotations.

The diagrams in Fig.~(\ref{fig:16_501510_plt}, \ref{fig:17_8329631424510_plt}) demonstrate the results of applying the algorithm to arrays of 4 and 8 elements. It can be seen that the algorithm orders the measurement probabilities of element indices in accordance with their values.

\begin{figure}[h!]
\centering
\includegraphics[width=1\textwidth]{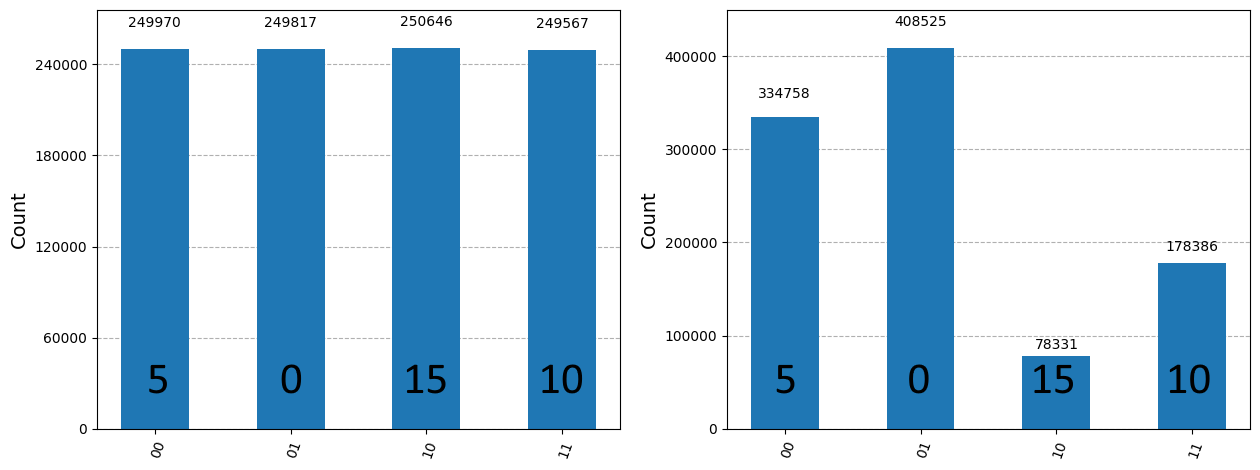}
\caption{Probabilities of array indices [5 0 15 10] for the reference value $R = 0$ before processing (left) and after applying the rotations (right), 1,000,000 measurements}\label{fig:16_501510_plt}
\end{figure}

\begin{figure*}[t]
\centering
\includegraphics[width=1\textwidth]{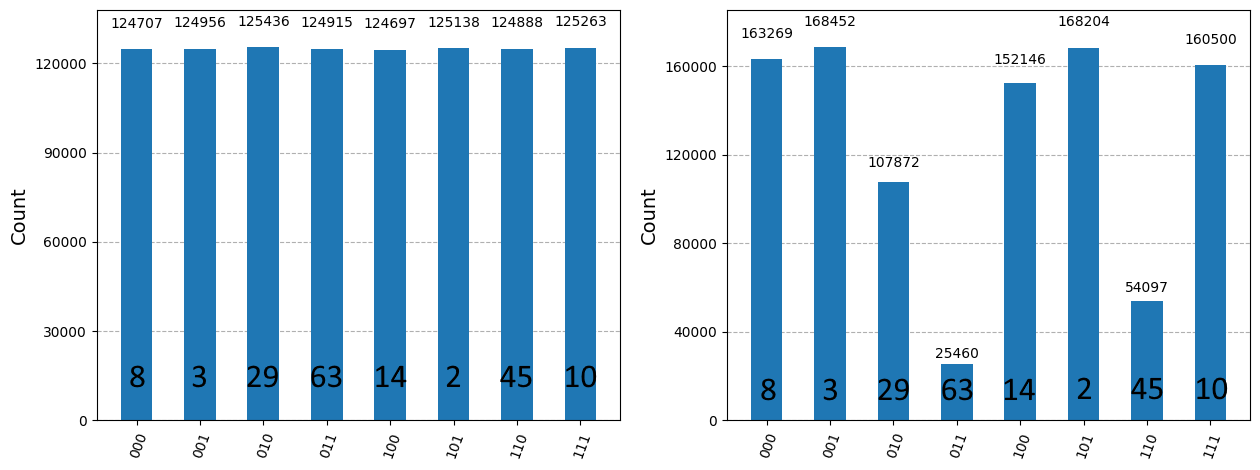}
\caption{Probabilities of array indices [8 3 29 63 14 2 45 10] for the reference value $R = 0$ before processing (left) and after applying the rotations (right), 1,000,000 measurements}\label{fig:17_8329631424510_plt}
\end{figure*}

Below are processing plots of a noisy triangular signal using a classical median filter and a quantum feedback filter, where the reference value (i.e., an analogue of the median) is the measurement result obtained in the previous cycle.

\begin{figure*}[t]
\centering
\includegraphics[width=1\textwidth]{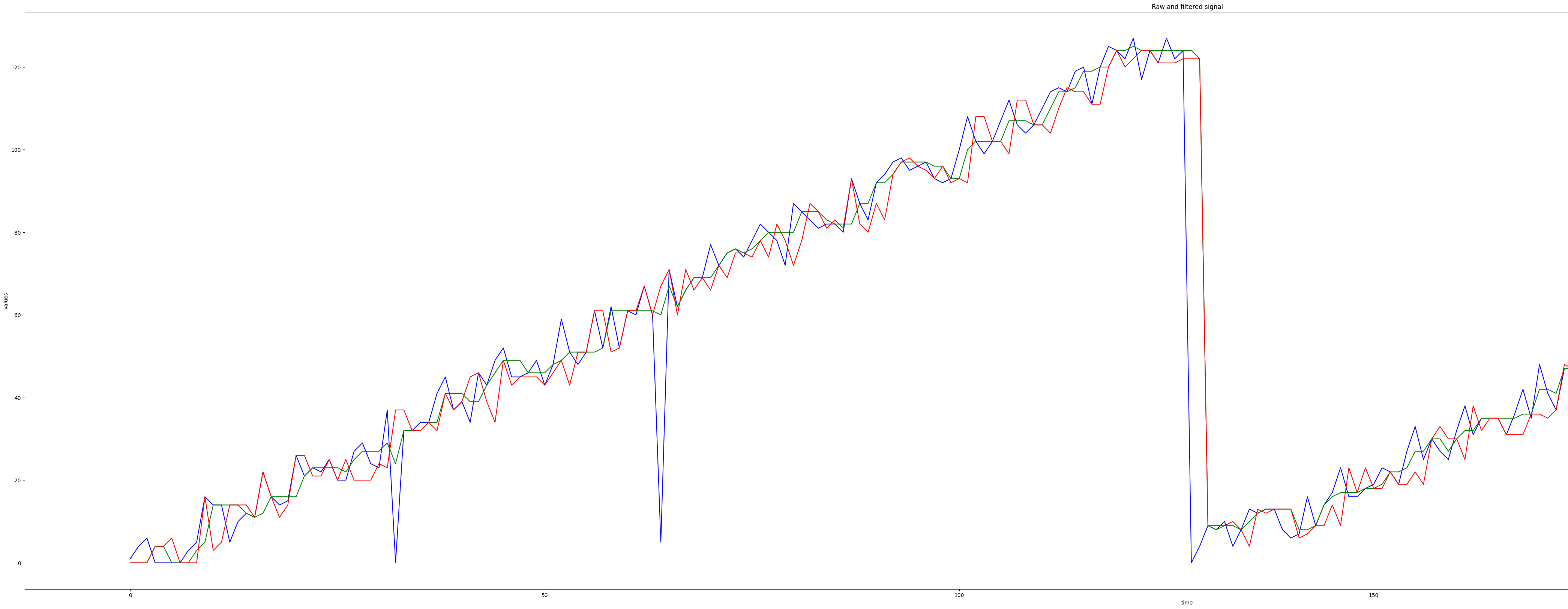}
\caption{Processing a noisy signal with two outliers (blue) using a median filter (green) and a quantum filter (red)}
\label{fig:18_two_outliers}
\end{figure*}

The following presents the results of processing various images with artificially introduced artifacts (highlighted in white) in comparison with a median filter. The settings of the quantum filter were optimized with respect to the maximum and minimum values within the window (array of pixels).

Elimination of repeated elements is achieved via additional $m = \log M$ qubits for each pixel, containing the element index in the array.

Image sources:\\
https://sipi.usc.edu/database/,
https://rentgenogram.com/.

\begin{figure*}[t]
\centering
\includegraphics[width=1\textwidth]{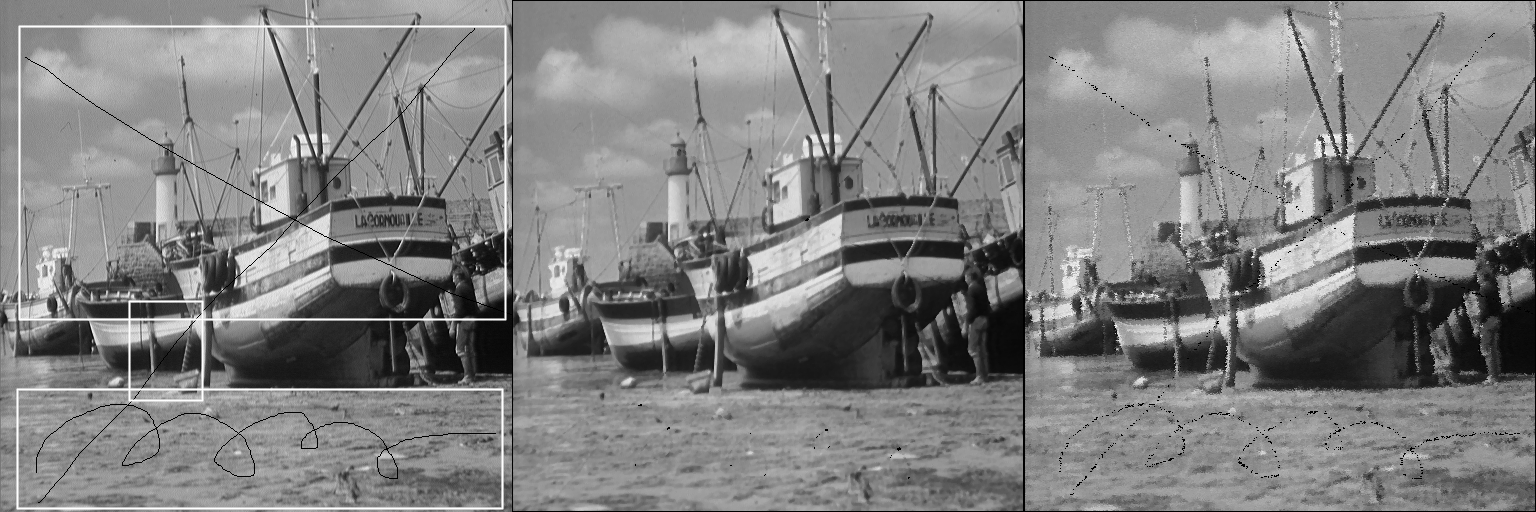}
\caption{Processing a black-and-white 8-bit image with an artifact in PNG format, 512*512 pixels. Original image (left), result of the median filter (middle), and result of the quantum feedback filter (right). Window width is 8 pixels.}\label{fig:boat_8}
\end{figure*}
\begin{figure*}[t]
\centering
\includegraphics[width=1\textwidth]{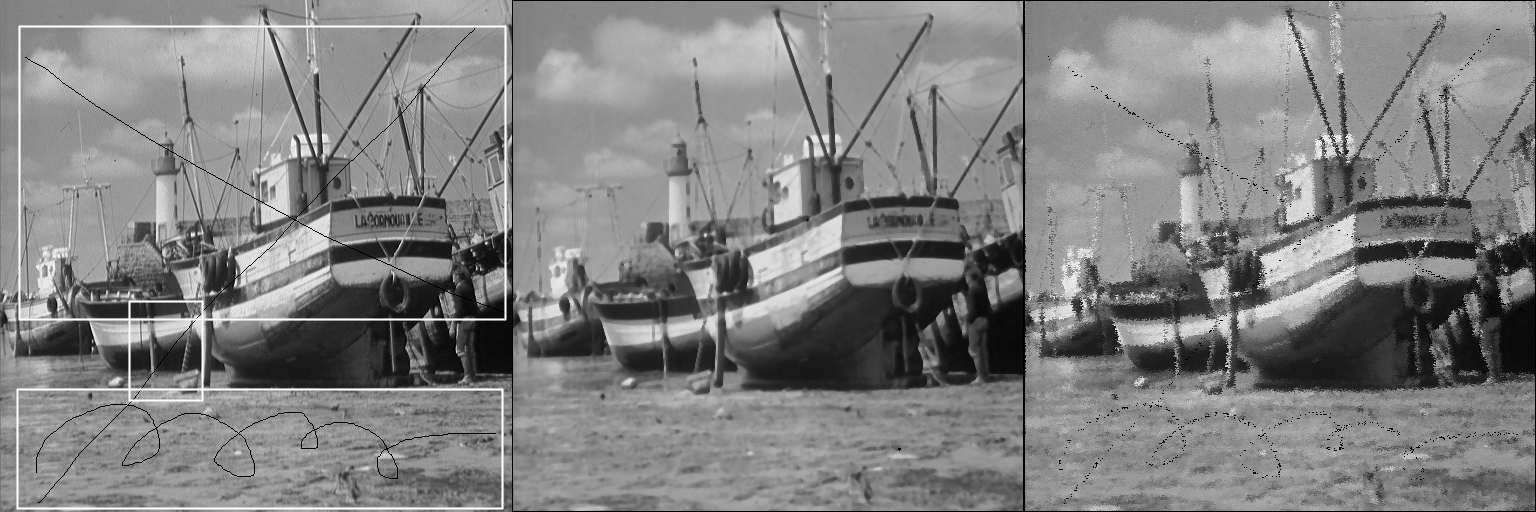}
\caption{Processing a black-and-white 8-bit image with an artifact in PNG format, 512*512 pixels. Original image (left), result of the median filter (middle), and result of the quantum feedback filter (right). Window width is 16 pixels.}\label{fig:boat_16}
\end{figure*}

\begin{figure*}[t]
\centering
\includegraphics[width=1\textwidth]{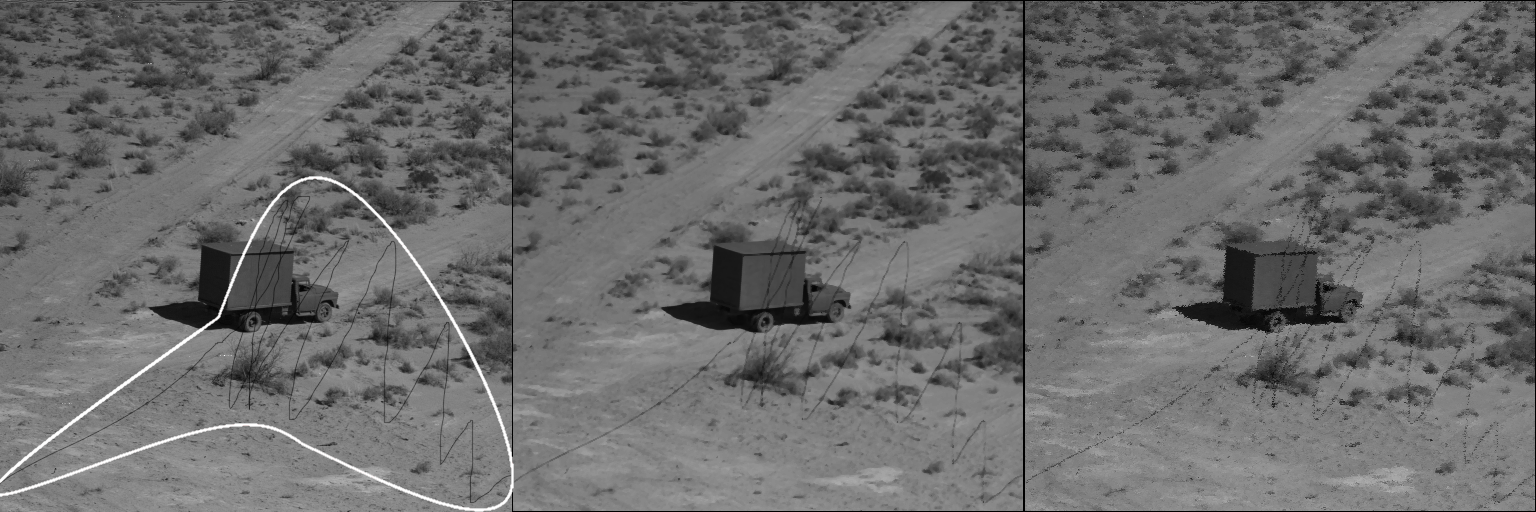}
\caption{Processing a black-and-white 8-bit image with an artifact in TIFF format, 512*512 pixels. Original image (left), result of the median filter (middle), and result of the quantum feedback filter (right). Window width is 8 pixels.}\label{fig:truck_8}
\end{figure*}
\begin{figure*}[t]
\centering
\includegraphics[width=1\textwidth]{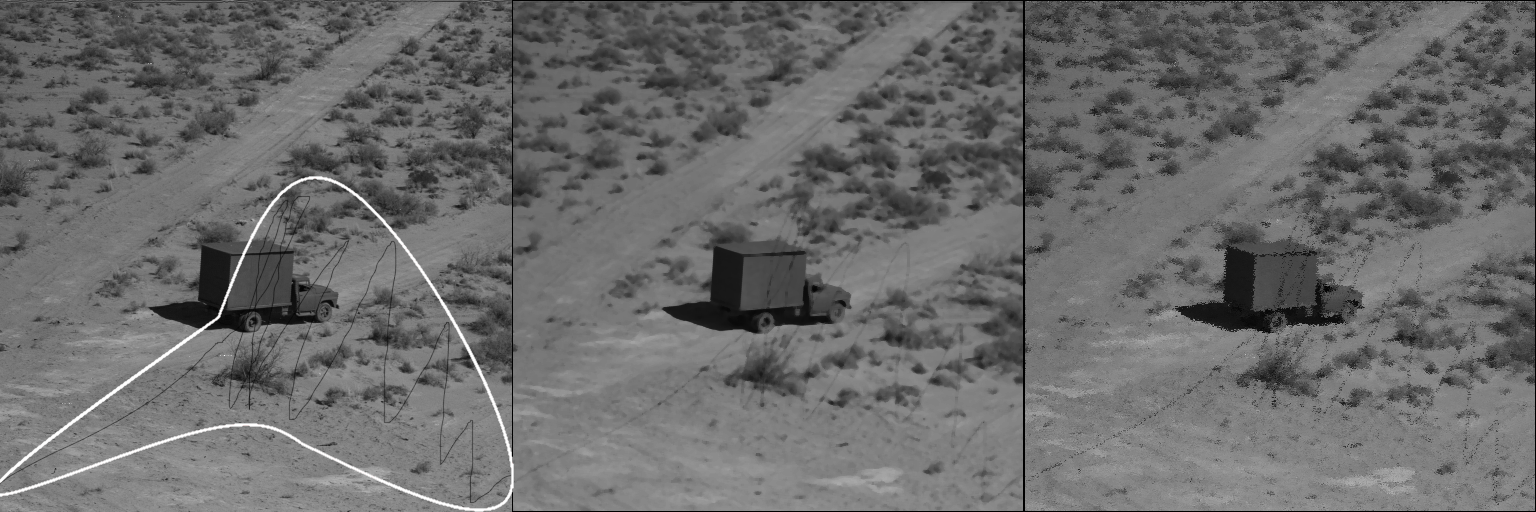}
\caption{Processing a black-and-white 8-bit image with an artifact in TIFF format, 512*512 pixels. Original image (left), result of the median filter (middle), and result of the quantum feedback filter (right). Window width is 16 pixels.}\label{fig:truck_16}
\end{figure*}

\begin{figure*}[t]
\centering
\includegraphics[width=1\textwidth]{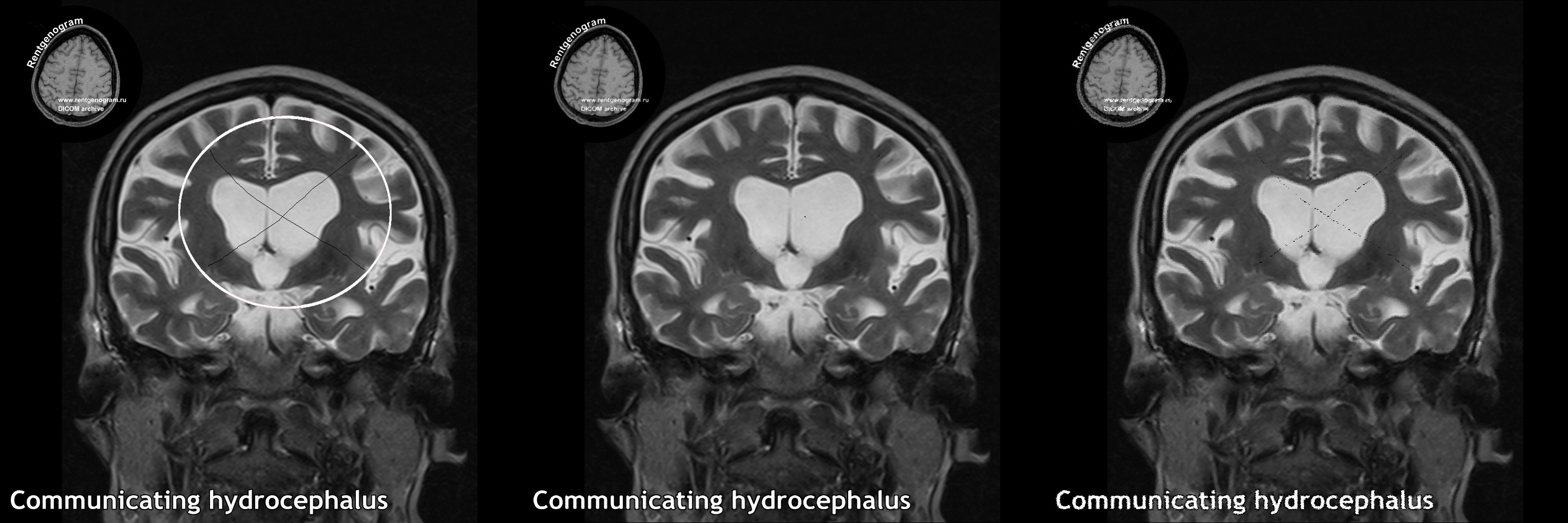}
\caption{Processing a black-and-white 8-bit MRI image with an artifact in JPG format, 1018*1018 pixels. Original image (left), result of the median filter (middle), and result of the quantum feedback filter (right). Window width is 8 pixels.}\label{fig:MRI_1018_8}
\end{figure*}
\begin{figure*}[t]
\centering
\includegraphics[width=1\textwidth]{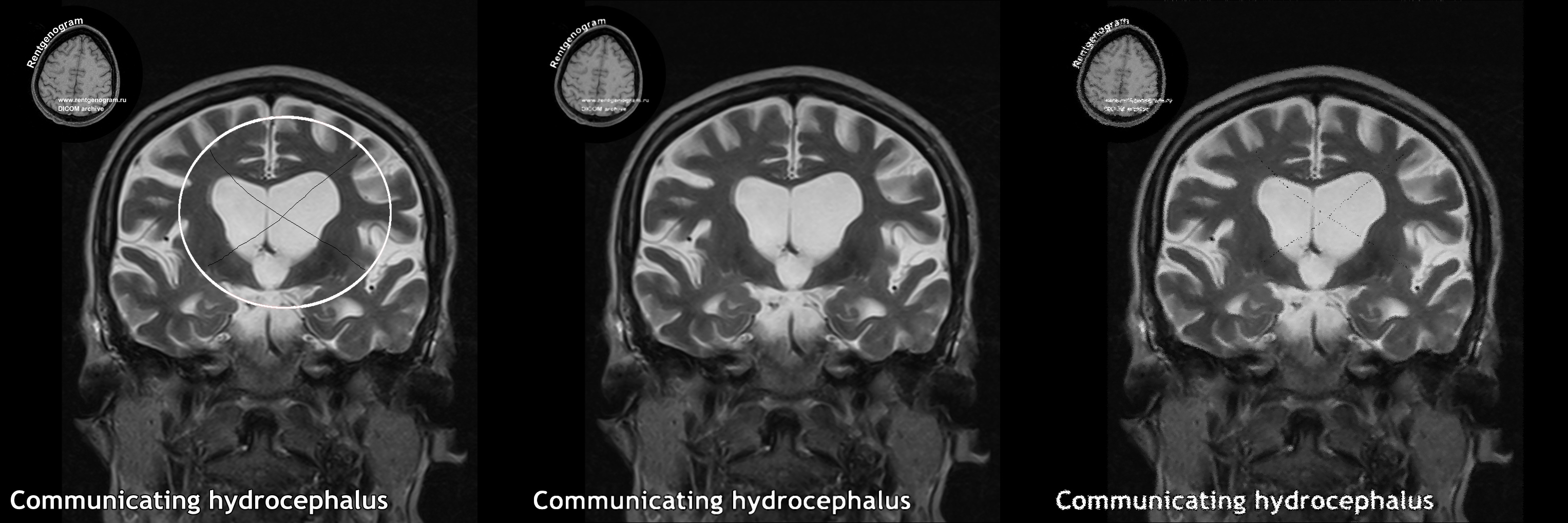}
\caption{Processing a black-and-white 8-bit MRI image with an artifact in JPG format, 1018*1018 pixels. Original image (left), result of the median filter (middle), and result of the quantum feedback filter (right). Window width is 16 pixels.}\label{fig:MRI_1018_16}
\end{figure*}

\begin{figure*}[t]
\centering
\includegraphics[width=1\textwidth]{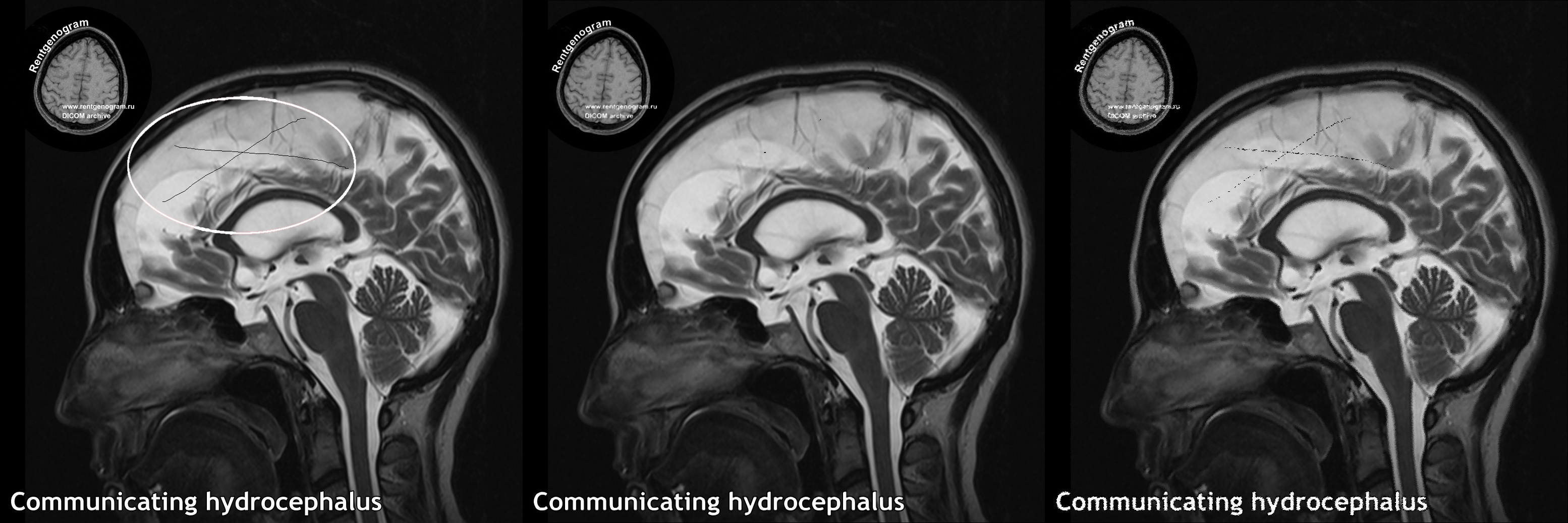}
\caption{Processing a black-and-white 8-bit MRI image with an artifact in JPG format, 1036*1036 pixels. Original image (left), result of the median filter (middle), and result of the quantum feedback filter (right). Window width is 8 pixels.}\label{fig:MRI_1036_8}
\end{figure*}
\begin{figure*}[t]
\centering
\includegraphics[width=1\textwidth]{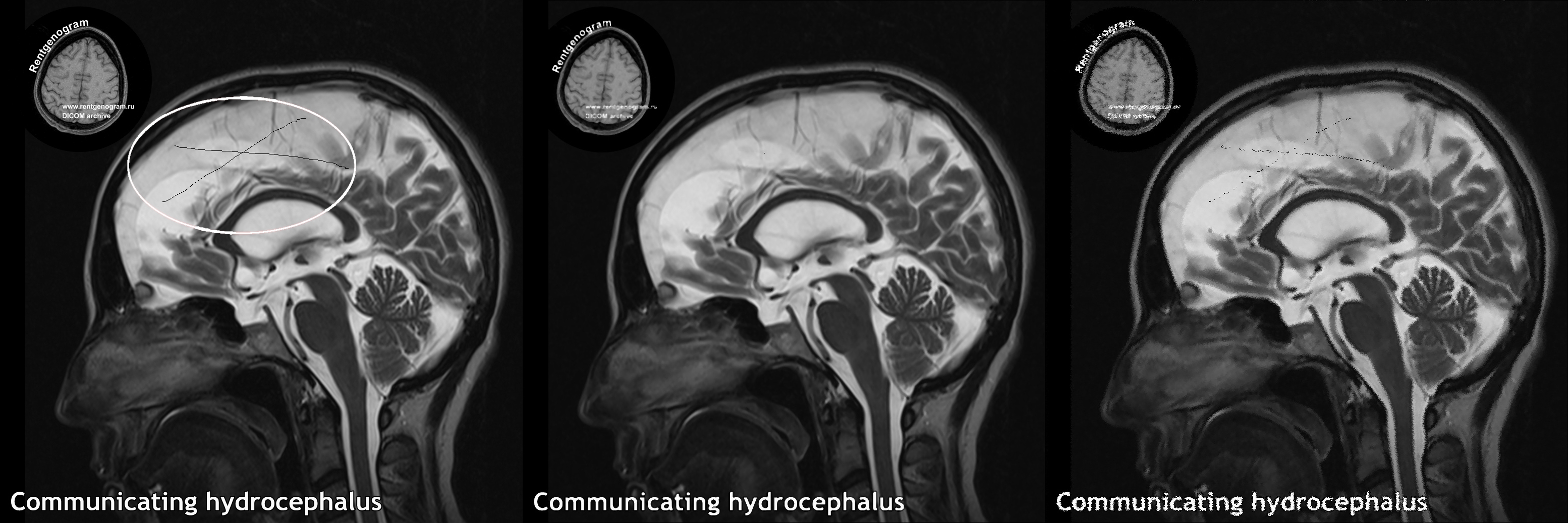}
\caption{Processing a black-and-white 8-bit MRI image with an artifact in JPG format, 1036*1036 pixels. Original image (left), result of the median filter (middle), and result of the quantum feedback filter (right). Window width is 16 pixels.}\label{fig:MRI_1036_16}
\end{figure*}

\clearpage
\section{Conclusion}

As a result of the theoretical and experimental study of the algorithm presented in \cite{z1,z2}, the following recommendations were formulated:
\begin{itemize}
    \item For the algorithm to operate as expected as a replacement for a median filter, it is necessary to ensure the uniqueness of elements in the array being processed.
    \item In register $D$, which contains a superposition of array elements, one must ensure the absence of data-unfilled qubits on the top and bottom. Unfilled qubits are those that contain zeros for all possible array values.
    \item Increasing the size of the sliding window in image processing leads to better filtering quality.
\end{itemize}
From the results presented in Section \ref{exp}, it can be seen that the algorithm performs slightly worse than the median filter, while providing a gain in computational complexity from $O(n*M*log_2M)$ to $O(n)$.

\section*{Acknowledgements}
The work of K.R.Zakharova was supported by the Ministry of Science and Higher Education of the Russian Federation (agreement 075-15-2025-344 dated 29/04/2025 for Saint Petersburg Leonhard Euler International Mathematical Institute at PDMI RAS).

\end{document}